\theoremstyle{plain}
\theoremstyle{definition}
\newtheorem{alg}{Algorithm}
\newtheorem{prb}{Problem}
\newcommand{\homo}{\textsc{Hom}$_<$\xspace}
\newcommand{\homom}{\textsc{Hom}$_<^M$\xspace}
\newcommand{\NP}{\textsf{NP}\xspace}
\newcommand{\wone}{\textsf{W[1]}\xspace}
\newcommand{\Oh}{\mathcal{O}}
\newcommand{\core}{\textsc{CORE}$_<$\xspace}
\newcommand{\subg}{\textsc{SUB}$_<$}
\newcommand{\subgm}{\textsc{SUB}$_<^M$}
\newcommand{\colcore}{\textsc{ColCORE}$_<$}
\newcommand{\colhomo}{\textsc{ColHom}$_<$}
\newcommand{\PP}{\textsf{P}}
\newcommand{\problemStatement}[3]{%
  \begin{center}
  \begin{tabularx}{\columnwidth}{@{}lX@{}}
  \toprule
  \multicolumn{2}{@{}c@{}}{\textsc{#1}}\tabularnewline
  \midrule
  \bfseries Input:    & #2 \\
  \bfseries Question: & #3 \\
  \bottomrule
  \end{tabularx}
  \end{center}
}
\begin{document}
\title{On Computational Aspects of Ordered Matching Problems}
%
%
\author{Michal \v{C}ert\'{\i}k \inst{1}\orcidID{0009-0008-6880-4896} \and
Andreas Emil Feldmann \inst{2}\orcidID{0000-0001-6229-5332} \and
Jaroslav Ne\v{s}et\v{r}il \inst{1}\orcidID{0000-0002-5133-5586} \and
Pawe\l{} Rz\k{a}\.zewski \inst{3}\orcidID{0000-0001-7696-3848}\thanks{Supported by the National Science Centre grant 2024/54/E/ST6/00094.}}
\authorrunning{M. \v{C}ert\'{\i}k et al.}
%
\institute{Computer Science Institute, Faculty of Mathematics and Physics \\
Charles University\\
Prague, Czech Republic \and
Department of Computer Science \\
University of Sheffield\\
Sheffield, United Kingdom \and
Warsaw University of Technology\\
and University of Warsaw\\
Warsaw, Poland}
\maketitle              
\begin{abstract}

     Ordered matchings, defined as graphs with linearly ordered vertices, where each vertex is connected to exactly one edge, play a crucial role in the area of ordered graphs and their homomorphisms. Therefore, we consider related problems from the complexity point of view and determine their corresponding computational and parameterized complexities. We show that the subgraph of ordered matchings problem is \NP-complete and we prove that the problem of finding ordered homomorphisms between ordered matchings is \NP-complete as well, implying \NP-completeness of more generic problems. In parameterized complexity setting, we consider a natural choice of parameter - a number of vertices of the image ordered graph. We show that in contrast to the complexity context, finding homomorphisms if the image ordered graph is an ordered matching, this problem parameterized by the number of vertices of the image ordered graph is FPT, which is known to be W$[1]$-hard for the general problem. We also determine that the problem of core for ordered matchings is solvable in polynomial time which is again in contrast to the \NP-completeness of the general problem. We provide several algorithms and generalize some of these problems into ordered graphs with colored edges.

\keywords{Computational Complexity \and Parameterized Complexity \and Ordered Graphs  \and Homomorphisms  \and Ordered Matchings \and Ordered Core}
\end{abstract}

\section{Introduction}

An \emph{ordered graph} is a graph whose vertex set is totally ordered (see example in Figure~\ref{fig:OrdHomsInterval}).

For two ordered graphs $G$ and $H$, an \emph{ordered homomorphism} from $G$ to $H$ is a mapping $f$ from $V(G)$ to $V(H)$ that preserves edges and vertices ordering, that is,
\begin{enumerate}
    \item for every $uv \in E(G)$ we have $f(u)f(v) \in E(H)$,
    \item for $u,v \in V(G)$, if $u \leq v$, then $f(u) \leq f(v)$.
\end{enumerate}

Note that the second condition implies that the preimage of every vertex of $H$ forms a segment or an \emph{independent interval} in the ordering of $V(G)$ (see Figure~\ref{fig:OrdHomsInterval}).

\begin{figure}[ht]
\begin{center}
\includegraphics[scale=0.8]{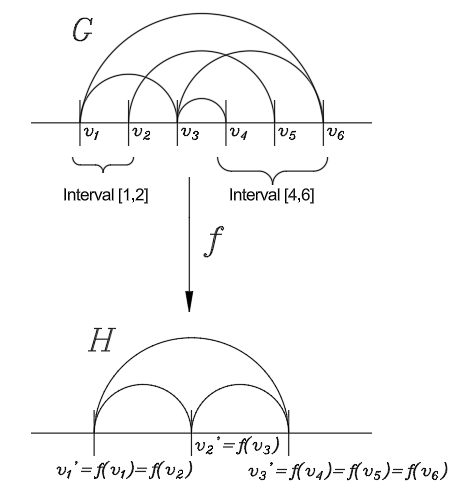}
\end{center}
\caption{Ordered Homomorphism $f$ and Independent Intervals.}
\label{fig:OrdHomsInterval}
\end{figure}

We define an \emph{Ordered Matching} as an ordered graph in which every vertex is incident to precisely one edge.

Let us continue with the definition of an \emph{Ordered Retraction} of $G$ to $H$ being an ordered homomorphism $f:G\to H$ such that $f(v)=v$ for all $v\in V(H)$.

If this ordered retraction of $G$ to $H$ exists, we say that $G$ \emph{order-retracts} to $H$ or that $H$ is an \emph{order-retract} of $G$. Notice that, as for digraphs (see \cite{HellNesetrilGraphHomomorphisms}), if ordered retraction exists, we have ordered homomorphism of $G$ to $H$ and inclusion-ordered homomorphism of $H$ to $G$, therefore $G$ and $H$ are (ordered) homomorphically equivalent.

Let us now define \emph{an Ordered Core} as an ordered graph that does not retract to a proper ordered subgraph. We show in ~\cite{certik_core_2025} that, as for unordered digraphs (see \cite{HellNesetrilGraphHomomorphisms}), an ordered graph $G$ is an ordered core if and only if there is no ordered homomorphism from $G$ to a proper ordered subgraph of $G$, and that every ordered graph is homomorphically equivalent to a unique ordered core.


Further on we may call an independent interval simply an interval, and an ordered graph, ordered homomorphism, ordered core, and ordered matching simply graph, homomorphism, core and matching, respectively.

\section{Motivation and Overview}

Ordered graphs naturally arise in several situations: Ramsey theory (~\cite{Nesetril1996},~\cite{Hedrlín1967},~\cite{balko2022offdiagonal}), extremal theory (~\cite{Pach2006},~\cite{conlon2016ordered}), category theory (~\cite{Hedrlín1967},~\cite{nesetril2016characterization}) and others. Recently, the notion of twin width of graphs has been shown to be equivalent to NIP (or dependent) classes of ordered graphs (~\cite{bonnet2021twinwidth}, ~\cite{bonnet2024twinwidth}) thus meeting graph theory and model theory.

Homomorphisms of ordered graphs confirm and complement the above line of research: they are more restrictive (than standard homomorphisms, see, e.g. ~\cite{HellNesetrilGraphHomomorphisms}) but display richness on its own (see, e.g. ~\cite{Axenovich2016ChromaticNO}, \cite{duffus1995computational-7a8}). In ~\cite{kun2025dichotomy-dd6} it has even recently been shown that ordering problems for graphs defined by finitely many forbidden ordered subgraphs capture the class \NP.

The significance of ordered matchings is known in the domain of ordered graphs and their homomorphisms (see, e.g., ~\cite{balko2022offdiagonal}, ~\cite{Balko_2020}, ~\cite{conlon2016ordered}, ~\cite{nescer2023duality}). Even within homomorphisms of ordered relational structures, analogues or ordered matchings play a critical role (see ~\cite{nescer2023dualityRel}). Therefore, we examine associated problems with the aim of determining their computational and parameterized complexities.

We start the article by showing the \NP-completeness of the ordered matching subgraph problem, in Section ~\ref{sec:Submatching}, which naturally implies the \NP-completeness of the more general question of the ordered graph subgraph problem.

In Section ~\ref{sec:homomatching}, we show the \NP-completeness of the question whether there exists an ordered homomorphism between given ordered matchings. This again implies the \NP-completeness of the more general question of finding ordered homomorphisms between given ordered graphs.

Section ~\ref{sect:ParamCompMatching} focuses on parameterized complexity of problems related to ordered homomorphisms $G\to H$ between ordered graph $G$ and ordered matching $H$, parameterized by $|V(H)|$. We show that when parameterized by $|V(H)|$, these problems are easy. We show in ~\cite{certik_complexity_2025} that this contrasts with the general choice of ordered graphs, where the parameterized version of this problem by $|V(H)|$ is hard.

We also show in Section ~\ref{sect:MatchCore} that if the core of an ordered graph $G$ is an ordered matching, then the question whether $G$ is a core is easy. On the other hand, we show in ~\cite{certik_core_2025} that the decision whether a general ordered graph is a core is \NP-complete.

The article ends with several open problems and outlines of related research.

\section{Subgraphs of Ordered Matchings}
\label{sec:Submatching}

We start with a definition of the problem of determining whether an input ordered matching $G$ is a subgraph of an input ordered matching $H$.

\begin{prb}
\end{prb}
\problemStatement{\subgm}
  {Ordered matchings $G$ and $H$.}
  {Is $G$ a subgraph (resp. an induced subgraph) of $H$?}

A problem concerning ordered graphs and their subgraphs, albeit with different specifications and implications, is addressed in~\cite{duffus1995computational-7a8}.

We show that the problem \subgm is \NP-complete even for the following specific class of ordered matchings.

For an ordered graph $G$ with vertices ordered as follows $V(G)=(v_1,\ldots,v_n)$, an $i$\emph{-th cut} $C = (S, T)$ is a partition of $V(G)$ of a graph $G$ into two subsets $S=(v_1,\dots,v_i)$ and $T=(v_{i+1},\dots,v_n)$. Then we say that an edge $e$ \emph{goes across} the $i$-th cut $C = (S, T)$ if and only if one vertex of the edge $e$ belongs to $S$ and another endpoint of the edge $e$ belongs to $T$.

An ordered graph $G$ is \emph{separated} if there exists $i \in [|V(G)|-1]$ such that all edges go across the $i$-th cut.

\begin{theorem}\label{thm:matsub}
    Given two separated ordered matchings $M$ and $N$, it is \NP-complete to decide whether $N$ is a subgraph (resp., an induced subgraph) of $M$.
\end{theorem}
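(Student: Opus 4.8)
The plan is to observe that separated ordered matchings are essentially permutations and to reduce from the \textsc{Permutation Pattern Matching} problem (\textsc{PPM}), which is known to be \NP-complete (Bose, Buss, and Lubiw). First I would set up the correspondence. Since every vertex of an ordered matching is covered, and a separating cut $(S,T)$ forces every edge to have exactly one endpoint on each side, we get $|S|=|T|$ and the cut is forced to sit at the midpoint. Hence a separated ordered matching on $2k$ vertices is nothing but a perfect matching between an increasing left block $\ell_1<\dots<\ell_k$ and an increasing right block $r_1<\dots<r_k$, i.e.\ a permutation $\pi\in S_k$ with $\ell_j$ matched to $r_{\pi(j)}$.

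The crux is a structural lemma describing what an ordered embedding of one separated matching $N$ (permutation $\sigma\in S_m$) into another $M$ (permutation $\pi\in S_n$) can look like. Writing $A,B$ for the left/right blocks of $N$ and $L,R$ for those of $M$, I would argue that any order-preserving injection $\phi$ realizing $N$ as a subgraph must send $A$ into $L$ and $B$ into $R$: each vertex of $A$ lies below its matched partner, an edge of $M$ cannot have both endpoints in $L$ or both in $R$, and $L$ precedes $R$, so the only possibility consistent with order preservation is $\phi(A)\subseteq L$ and $\phi(B)\subseteq R$. Order preservation then forces the induced maps $f,g\colon[m]\to[n]$ on the two blocks to be strictly increasing, and the edge constraint becomes $\pi(f(i))=g(\sigma(i))$ for all $i$. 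Eliminating $g$, this says exactly that the subsequence $\pi(f(1)),\dots,\pi(f(m))$ is order-isomorphic to $\sigma$, that is, $\sigma$ occurs as a pattern in $\pi$.

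With this translation in hand the reduction is immediate: from a \textsc{PPM} instance $(\sigma,\pi)$ I build the corresponding separated matchings $N$ and $M$, and the equivalence above shows that $N$ is a subgraph of $M$ if and only if $\sigma$ patterns into $\pi$. Membership in \NP is witnessed by the embedding $\phi$ itself. The main obstacle I anticipate is not the reduction but verifying the structural lemma cleanly, and in particular checking that for matchings the subgraph and induced-subgraph questions coincide here: once $\phi(a_i)=\ell_{f(i)}$ is fixed, the unique $M$-edge at $\ell_{f(i)}$ goes to $r_{\pi(f(i))}=r_{g(\sigma(i))}$, which already lies in $\phi(B)$, so every image vertex keeps exactly its matching partner and no spurious edge is induced. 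This makes the same construction settle both the subgraph and the induced-subgraph variants simultaneously.
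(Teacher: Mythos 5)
Your proposal is correct and follows essentially the same route as the paper: both reduce from \textsc{Permutation Pattern Matching} by encoding each permutation as a separated matching between an increasing left block and an increasing right block, and both observe that for matchings the subgraph and induced-subgraph variants coincide. Your structural lemma (that any ordered embedding must send left block to left block and right block to right block, forcing the pattern-containment equivalence) is in fact spelled out more carefully than in the paper's own argument, but it is the same underlying idea.
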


\begin{proof}
    We reduce from the \textsc{Permutation Pattern Matching Problem}. There we are given two permutations $\pi : [n] \to [n]$ and $\Pi : [m] \to [m]$, where $m \geq n$,
    and we ask whether $\Pi$ contains $\pi$ as a sub-permutation (i.e., whether there are $n$ elements whose relative positions in $\Pi$ are as in $\pi$). As shown in ~\cite{BoseBussLubiw1993}, this problem is \NP-complete.

    Let $M$ be the $m$-edge matching, consisting of the edges $(i, m+\Pi(i))$ for $i \in [m]$.
    Similarly, let $N$ be the $n$-edge matching, consisting of the edges $(i, n+\pi(i))$ for $i \in [n]$ (see Figure ~\ref{pic:MatchingsSub}).
    We claim that $N$ is a subgraph of $M$ if and only if $\Pi$ contains $\pi$.

    \begin{figure}[p]\centering
    \includegraphics[width=\textwidth,height=\textheight,keepaspectratio]{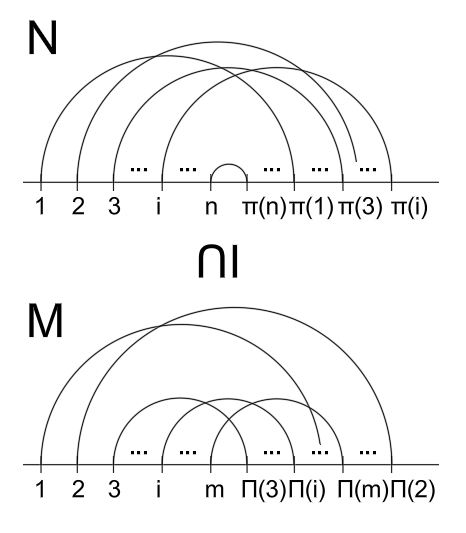}
    \caption{Ordered Graphs $N$ and $M$ from Theorem ~\ref{thm:matsub}.}
    \label{pic:MatchingsSub}
    \end{figure}

    First, suppose that $\Pi$ contains $\pi$, that is, there are $1 \leq i_1 < \ldots <i_m \leq n$ such that $\Pi$ restricted to $(i_1,\ldots,i_n)$ is $\pi$.
    We claim that the subgraph of $M$ induced by $X=\bigcup_{j \in [n]} \{ i_j, \Pi(i_j) \}$ is isomorphic to $N$.
    For contradiction, suppose otherwise.
    Clearly, $N[X]$ is a separated $n$-edge matching. In addition, the right endpoints of the edges are ordered according to $\pi$. Thus, $M[X]$ is indeed a copy of $N$.

    Now suppose that $N$ is a subgraph of $M$. Note that since both $M$ and $N$ are matchings, this subgraph must be induced. Let $X \in [2n]$ be such that $M[X]$ is isomorphic to $N$. Then $\pi$, defined by the edges $(j, n+\pi(j)), j \in [n]$ of $M[X]$, is clearly a subpermutation of $\Pi$, defined by the edges $(i, m+\Pi(i)), i \in [m]$ of $M$, as $\Pi$ restricted to $M[X]$ is $\pi$.
    
\end{proof}

Let us now define a more generic problem of determining whether an ordered graph $G$ is a subgraph of an ordered graph $H$.

\begin{prb}
\end{prb}
\problemStatement{\subg}
  {Ordered graphs $G$ and $H$.}
  {Is $G$ a subgraph (resp. an induced subgraph) of $H$?}

 Theorem ~\ref{thm:matsub}, of course, implies the following.

\begin{corollary}
    \subg ~and \subgm ~are \NP-complete.
\end{corollary}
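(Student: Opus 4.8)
The plan is to obtain both claims as immediate consequences of Theorem~\ref{thm:matsub}, supplemented only by a routine membership argument, since no new reduction is needed.

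First I would verify that both problems lie in \NP. For a yes-instance of either problem a succinct certificate is an injective mapping $\varphi : V(G) \to V(H)$ witnessing the embedding. Verifying such a certificate takes polynomial time: one checks that $\varphi$ respects the vertex ordering by comparing all pairs of vertices (in $\Oh(|V(G)|^2)$ comparisons), that every edge of $G$ is mapped to an edge of $H$ for the subgraph version, and, for the induced version, additionally that no non-adjacent pair of $G$ is mapped to an adjacent pair of $H$. All of these checks run in polynomial time, so \subg~and \subgm~are both in \NP.

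For hardness I would argue by restriction to the hard instances supplied by Theorem~\ref{thm:matsub}. Every separated ordered matching is in particular an ordered matching, so the instances $(N,M)$ constructed in the reduction of Theorem~\ref{thm:matsub} are legitimate instances of \subgm; hence \subgm~is \NP-hard, and combined with membership it is \NP-complete. Moreover, every ordered matching is an ordered graph, so those very same instances are also legitimate instances of \subg, which yields the \NP-hardness and thus \NP-completeness of \subg~as well. The argument applies verbatim to both the subgraph and the induced-subgraph versions, since Theorem~\ref{thm:matsub} already establishes hardness for both (and for matchings the two notions coincide).

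There is no genuine obstacle here; the corollary is essentially a bookkeeping observation that the hard class of Theorem~\ref{thm:matsub} sits inside the input classes of the two more general problems. The only point deserving a moment of care is that the class inclusions run in the correct direction, namely that a special case (separated matchings) is being embedded into the more general problems, so that hardness is inherited upward rather than the reverse.
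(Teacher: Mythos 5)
Your proposal is correct and follows essentially the same route as the paper: membership in \NP plus hardness inherited from the restricted instances of Theorem~\ref{thm:matsub}, whose inputs are special cases of both problems. You merely spell out the certificate verification and the direction of the class inclusions in more detail than the paper does.
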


\begin{proof}
    Since both problems \subg ~and \subgm are more general than the problem in Theorem ~\ref{thm:matsub} and \subg ~and \subgm ~are, of course, in \NP, the statement follows.
\end{proof}

\section{Finding Homomorphisms of Ordered Matchings}
\label{sec:homomatching}

It is easy to see that for a fixed ordered graph $H$, the (ordered) $H$-coloring problem, defined as the problem of finding an ordered homomorphism from a given ordered graph $G$ to a fixed ordered graph $H$, is in \PP ~(see, e.g. ~\cite{nescer2023duality}, ~\cite{certik_complexity_2025}).

In this section, we will therefore focus on the \homo computational problem, where $H$ is not fixed and is part of an input. Let us start with a definition of \homo for ordered matchings, whose input is a pair of ordered matchings $G$ and $H$, and we ask if $G$ admits an ordered homomorphism to $H$.

\begin{prb}{\homom}
\label{Prb:Hom<}
\end{prb}
\problemStatement{\homom}
  {Ordered matchings $G$ and $H$.}
  {Does there exist an ordered homomorphism $f:G\to H$?}

We now show that the \homom problem is \NP-complete.

\begin{theorem}
\label{thm:MatchingsHomNPC}
    Given two ordered matchings $M$ and $N$, it is \NP-complete to decide whether $N \to M$.
\end{theorem}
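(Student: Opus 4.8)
The plan is to first settle membership in \NP and then prove hardness by reducing from the subgraph problem established in Theorem~\ref{thm:matsub}. Membership is immediate: an ordered homomorphism $f\colon N\to M$ is described by the list of images $f(v)$, $v\in V(N)$, which has polynomial size, and the two defining conditions (edges to edges, order preservation) are checkable in linear time. The conceptual difficulty, and the reason this does not follow formally from Theorem~\ref{thm:matsub}, is that a homomorphism need not be injective: by the interval observation from the introduction, the preimage of each vertex of $M$ is a contiguous block of $V(N)$, so a homomorphism amounts to first \emph{contracting} $N$ along a system of intervals into a matching $N'$ and then embedding $N'$ into $M$ in an order-preserving way. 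This extra folding power makes \homom genuinely weaker to satisfy than the subgraph relation---for instance, any two separated matchings admit a homomorphism, since one may send the whole prefix of $N$ onto one endpoint of a single edge of $M$ and the whole suffix onto the other---so a naive reuse of the separated construction of Theorem~\ref{thm:matsub} fails.

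To recover hardness I would start from an instance of \subgm on separated matchings $N_0,M_0$ exactly as in Theorem~\ref{thm:matsub} and rigidify the source so that folding becomes impossible. The device is a \emph{pin}: a short edge $\{c,c+1\}$ joining two consecutive vertices. A pin survives the contraction to $N'$ only if its two endpoints lie in different intervals, so it forces an interval boundary in its own gap. I would interleave $N_0$ with pins, placing one pin in every gap between consecutive original endpoints and at both ends, so that $N$ reads (pin, endpoint, pin, endpoint, $\dots$, pin). One then checks that no original endpoint can share an interval with a neighbouring pin endpoint---doing so would give the contracted vertex two distinct neighbours and violate the matching property---so every vertex of $N$ is forced into its own interval. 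Hence \emph{every} homomorphism $f\colon N\to M$ is strictly order-preserving, i.e.\ an injective ordered-subgraph embedding, and on such instances $N\to M$ holds if and only if $N$ is an ordered subgraph of $M$.

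It remains to build the target $M$ and match the two problems. Since two pins can never embed simultaneously into a single-cut matching, $M$ must carry its own grid of pins; I would augment $M_0$ with pins placed so that order preservation forces each pin of $N$ onto a pin of $M$ and each long arc of $N$ onto a long arc of $M$, thereby making an embedding $N\hookrightarrow M$ restrict to an embedding $N_0\hookrightarrow M_0$, and conversely letting any $N_0\hookrightarrow M_0$ extend to all of $N$. Correctness then reduces to Theorem~\ref{thm:matsub}. The two steps I expect to be the crux are (i) the rigidity lemma---proving that the pins really forbid \emph{all} nontrivial contractions, which requires reasoning about arbitrary interval systems rather than single merges---and (ii) the pin placement in $M$, which must be tight enough that no arc of $N$ can cheat by mapping onto a pin of $M$ (or vice versa) and create a spurious homomorphism. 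Once these are in place, membership in \NP together with the two-way equivalence yields \NP-completeness of \homom.
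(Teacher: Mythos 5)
Your proposal is correct and takes essentially the same approach as the paper: your ``pins'' are exactly the paper's \emph{auxiliary edges}, namely a short edge inserted between every pair of consecutive endpoints of the permutation matching in both $N$ and $M$, and your correctness argument (pins force interval boundaries so the permutation edges must embed injectively and cannot land on pins, reducing the question to Theorem~\ref{thm:matsub}) matches the paper's. The only cosmetic difference is that you phrase it as a reduction from \subgm{} while the paper reduces directly from \textsc{Permutation Pattern Matching}, but since the instances of Theorem~\ref{thm:matsub} are precisely the permutation matchings, these are the same reduction.
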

\label{thm:matchNPC}

\begin{proof}
    We again reduce from the \textsc{Permutation Pattern Matching Problem}.

    Let $M$ consist of the $m$-edge matching, consisting of the edges $(i, m+\Pi(i))$ for $i \in [m]$ (as in the proof of Theorem ~\ref{thm:matsub}) and the following edge set. Take the $2m$ vertices $i$, ordered naturally, created by the previous $m$-edge matching and add the vertices $a_i$ and $b_i, a_i<b_i$ between the vertices $i$ and $i+1, i\in [2m-1]$. Add vertices $c_m$ and $d_m, c_m<d_m$ between vertices $b_m$ and $m+1$. The positions of the vertices $a_i$ and $b_i, i\in [2m-1]$ and $c_m$ and $d_m$ are uniquely defined. Create an edge between the vertices $a_i$ and $b_i, i\in [2m-1]$ and create an edge between the vertices $c_m$ and $d_m$. The ordered graph $M$ is an ordered matching with $3m$ edges (see example of an ordered graph $M$ in Figure ~\ref{pic:MatchingsHom}).
    
    Similarly, let $N$ consist of the $n$-edge matching, consisting of the edges $(i, n+\pi(i))$ for $i \in [n]$ and the following edge set. Take the $2n$ vertices $i$, ordered naturally, created by the previous $n$-edge matching and add the vertices $a_i$ and $b_i$ between the vertices $i$ and $i+1, i\in [2n-1]$. Add vertices $c_n$ and $d_n$ between vertices $b_n$ and $n+1$. Create an edge between the vertices $a_i$ and $b_i, i\in [2n-1]$ and create an edge between the vertices $c_n$ and $d_n$ (see example of an ordered graph $N$ in Figure ~\ref{pic:MatchingsHom}).

    \begin{figure}[p]\centering
    \includegraphics[width=\textwidth,height=\textheight,keepaspectratio]{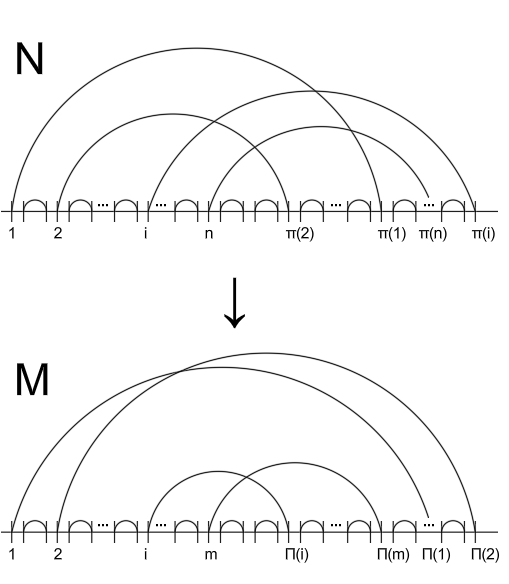}
    \caption{Ordered Homomorphism from $N$ to $M$ from the proof of Theorem ~\ref{thm:MatchingsHomNPC}}
    \label{pic:MatchingsHom}
    \end{figure}
    
    We claim that there exists an ordered homomorphism $N\to M$ if and only if $\Pi$ contains $\pi$.
    Denote the edges $(i, n+\pi(i))$ for $i \in [n]$ as \emph{permutation edges} and edges in between $a_i$ and $b_i, i\in [2n-1]$ and $c_n$ and $d_n$ by \emph{auxiliary edges}. We adopt the same notation for the edges in $M$.

    First, suppose that $\Pi$ contains $\pi$, that is, there are $1 \leq i_1 < \ldots <i_m \leq n$ such that $\Pi$ restricted to $(i_1,\ldots,i_m)$ is $\pi$. From Theorem ~\ref{thm:matsub} we know that the subgraph of $M$ induced by permutation edges of $M$ must contain a graph $M[X]$ isomorphic to a subgraph of $N$ induced by permutation edges of $N$. Auxiliary edges in $N$ ensure that this mapping must be injective.
    
    For contradiction, suppose otherwise. Suppose that two permutation edges of $N$ map to one permutation edge of $M$. This cannot happen, as there is always an auxiliary edge in between the ends of the permutation edges. Now suppose that a permutation edge $e$ in $N$ maps to an auxiliary edge $e'$ in $M$. This cannot happen either, as there are always at least two auxiliary edges between the ends of $e$ in $N$ and there are no edges between the ends of an auxiliary edge $e'$ in $M$. Therefore, the permutation edges of $N$ map injectively to an induced subgraph $M[X]$ of $M$. 
    
    It remains to show that the auxiliary edges of $N$ can also always be mapped to the edges of $M$. But this is also obvious, as there are always auxiliary edges between the permutation edges $M[X]$ in $M$.

    Now suppose that there exists an ordered homomorphism $N\to M$. We showed that the mapping of the permutation edges of $N$ must be injective to the permutation edges of $M$. Therefore, this mapping defines an induced subgraph $M[X]$ of $M$, where $M[X]$ is isomorphic to $N$. But then we follow the same argument as in the proof of Theorem ~\ref{thm:matsub}, showing that $\Pi$ restricted to $M[X]$ is a subpermutation $\pi$, $\Pi$ and $\pi$ defined by permutation edges of $M$ and $N$, respectively.
    
\end{proof}

Let us now define the more generic problem \homo, where the inputs are any ordered graphs.

\begin{prb}{\homo}
\label{Prb:Hom<}
\end{prb}
\problemStatement{\homo}
  {Ordered graphs $G$ and $H$.}
  {Does there exist an ordered homomorphism $f:G\to H$?}

The following result then naturally follows from Theorem ~\ref{thm:matchNPC}.

\begin{corollary}
    \homo is \NP-complete.
\end{corollary}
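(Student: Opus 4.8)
The plan is to verify the two standard components of \NP-completeness, relying entirely on the hardness already established in Theorem~\ref{thm:matchNPC}. There is no new construction to carry out; the corollary is a routine consequence of the fact that ordered matchings are a special case of ordered graphs.

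First I would argue membership in \NP. A candidate ordered homomorphism $f:G\to H$ is simply a function $V(G)\to V(H)$, whose description has size polynomial in $|V(G)|$ and $\log|V(H)|$, so it serves as a polynomial-size certificate. Given such an $f$, I would check the two defining conditions directly: that $f(u)f(v)\in E(H)$ for every $uv\in E(G)$, and that $u\le v$ implies $f(u)\le f(v)$ for all $u,v\in V(G)$. Each test iterates over the edges or ordered pairs of $G$ and queries adjacency and order in $H$, so the verification runs in polynomial time.

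For hardness I would observe that every ordered matching is, in particular, an ordered graph, so the problem \homom is exactly the restriction of \homo to instances in which both input graphs happen to be matchings. Consequently the identity map is a trivial polynomial-time reduction from \homom to \homo: an instance $(N,M)$ of \homom is verbatim an instance of \homo with the same yes/no answer. Since Theorem~\ref{thm:matchNPC} shows that \homom is \NP-hard, this transfers hardness to \homo, and combining this with membership yields the claim.

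There is essentially no obstacle here, as all of the real work was done in constructing the matchings $M$ and $N$ in Theorem~\ref{thm:matchNPC}. The only points requiring minimal care are confirming that checking the order-preservation condition is genuinely polynomial (which it plainly is) and noting explicitly that matching instances are legitimate ordered-graph instances, so that the containment of problems is bona fide rather than merely formal.
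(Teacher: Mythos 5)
Your proposal is correct and follows exactly the paper's own argument: membership in \NP via the homomorphism itself as a polynomial-time-verifiable certificate, and hardness because \homom (shown \NP-complete in Theorem~\ref{thm:matchNPC}) is the restriction of \homo to matching instances, so the identity map is the reduction. You simply spell out the routine details that the paper leaves implicit.
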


\begin{proof}
    Since \homo problem is more general than the \homom in Theorem ~\ref{thm:matchNPC} and \homo ~is, of course, in \NP, the statement follows.
\end{proof}

\section{Parameterized Complexity of Homomorphisms of Ordered Matchings}
\label{sect:ParamCompMatching}

Let us show that the \homo problem, when parameterized by $|V(M)|$ for an ordered matching $M$, is FPT.

\begin{theorem}
\label{thm:matchfpt}
    Let $G$ be an ordered graph and $M$ be an ordered matching. Then deciding whether there exists an ordered homomorphism $G\to M$ is fixed-parameter tractable with respect to $|V(M)|$.
\end{theorem}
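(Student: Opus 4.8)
The plan is to exploit a strong structural restriction forced by the fact that the target $M$ is a matching. First I would establish the key observation: any ordered homomorphism $f\colon G\to M$ must map each connected component of $G$ into a single edge of $M$. Indeed, fix a vertex $u$ of a component $C$ and suppose $f(u)=m_a$; since $m_a$ has a unique neighbour $m_b$ in the matching $M$, every neighbour of $u$ must be sent to $m_b$, and propagating this along $C$ shows that $f(C)\subseteq\{m_a,m_b\}$ with $\{m_a,m_b\}\in E(M)$. In particular $C$ must be bipartite. Moreover, writing $a<b$, the order-preservation of $f$ forces every vertex mapped to $m_a$ to precede (in the vertex order of $G$) every vertex mapped to $m_b$; hence the unique bipartition of $C$ must be realisable as an order cut, i.e.\ $C$ is itself a separated bipartite ordered graph whose two colour classes are the two sides of the cut. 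Whether a component satisfies this can be tested in polynomial time, and if some component fails, we reject.

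After this step the problem collapses to an assignment problem. Each admissible component $C$ can only be placed on some edge of $M$, and the orientation is forced: the left colour class of $C$ goes to the left endpoint of the chosen edge and the right class to the right endpoint. Thus a homomorphism corresponds to a choice, for every component, of one of the $k:=|E(M)|$ edges of $M$; this choice already determines the image of every non-isolated vertex, while isolated vertices carry no edge constraint and can be slotted in freely. It therefore remains to decide whether the components can be assigned to edges so that the resulting (completely determined) labelling of $v_1,\dots,v_n$ is nondecreasing in the order.

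To solve this assignment in FPT time I would sweep the vertices $v_1,\dots,v_n$ from left to right, committing a component to an edge of $M$ the first time one of its vertices is met, and run a dynamic program whose state records the index of the last $M$-vertex used (to enforce monotonicity) together with the set of edges of $M$ that are currently \emph{open}, i.e.\ whose left endpoint has already received vertices but whose right endpoint is still pending. The crucial point is that any open edge must cross the current cut of $M$, and at most $k$ edges cross any cut; hence there are at most $2^{O(k)}\cdot 2k$ states, each transition costs $\mathrm{poly}(n)$, and the exponent of $n$ does not depend on $k$. This yields a running time of $2^{O(|V(M)|)}\cdot\mathrm{poly}(n)$, which is exactly what is needed, and is consistent with the fact that for general targets the analogous algorithm would instead require a degree growing with $|V(H)|$ (matching the \wone-hardness there).

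I expect the main obstacle to be the correctness of this bounded-state dynamic program: one must verify that the state really summarises all the information relevant to monotonicity, that it depends only on $|V(M)|$ rather than on the number of components, and that components which are nested or which share the same edge of $M$ are handled correctly, so that the stepwise nondecreasing check is equivalent to global monotonicity of $f$. The structural observation of the first paragraph is the heart of the argument, since it is what reduces an a priori $n^{\Theta(k)}$ search over interval partitions to a search of bounded width.
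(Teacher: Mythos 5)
Your opening structural step is correct and matches the starting point of the paper's argument: since every component of the matching $M$ is a single edge, each connected component of $G$ must map into one edge of $M$, its unique bipartition must be realisable as an order cut into two independent intervals of that component, and this is polynomial-time checkable; the problem then becomes an order-constrained assignment of components to edges of $M$. The gap is in the dynamic program that is supposed to solve this assignment in FPT time. Its state --- the last $M$-vertex used together with the \emph{set} of open edges --- does not record \emph{which} open component is committed to \emph{which} open edge, and that correspondence is exactly what determines the forced image of a vertex when a component's right class begins, and what guarantees that the two classes of a component land on the two endpoints of the \emph{same} edge of $M$. Concretely, let $M$ have the two crossing edges $e_1=\{m_1,m_3\}$ and $e_2=\{m_2,m_4\}$, and let $G$ consist of the three edges $\{1,5\},\{2,4\},\{3,6\}$. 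The assignment $(e_1,e_1,e_2)$ yields the valid homomorphism $(m_1,m_1,m_2,m_3,m_3,m_4)$, while $(e_1,e_2,e_2)$ yields $(m_1,m_2,m_2,m_4,m_3,m_4)$, which is not monotone; yet both prefixes reach position $3$ with last-used $m_2$ and open set $\{e_1,e_2\}$, and they force different images ($m_3$ versus $m_4$) on vertex $4$. A DP that conflates these either accepts chimeric maps sending an edge of $G$ to a non-edge such as $\{m_2,m_3\}$, or must carry the component-to-edge assignment (or at least per-edge multiplicities) in its state, whose naive encoding is $n^{\Theta(k)}$ and destroys fixed-parameter tractability. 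You flag this yourself as ``the main obstacle''; it is not a detail to be verified but the actual content of the theorem, and it is left unproved.

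The paper avoids the DP altogether. It first computes, by a left-to-right greedy merge, a canonical smallest ordered matching $G'$ with $G\to G'$: groups of edges of $G$ are collapsed onto a single edge exactly when their two sides form two independent intervals. It then argues that this $G'$ is the core of $G$ (when that core is a matching), so that $G\to M$ holds if and only if $G'\to M$, and that a homomorphism from the core matching $G'$ into a matching must be an induced embedding; the latter is tested by brute force over the $2^{|V(M)|}$ vertex subsets of $M$. To rescue your route you would need an analogous canonicalisation, or a lemma showing that the assignment of the open components to the open edges is determined by the state together with $G$ --- and the example above shows such a lemma is false for the state you chose.
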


\begin{proof}
    
We start by constructing the smallest possible ordered matching $G'$ out of $G$, by mapping the subgraphs of $G$ into disjoint edges.

    Let $G$ have $g$ vertices and let $M$ have $m$ vertices. We denote these vertices in $G$ by $1,2,\ldots,g$.

    Let us construct the smallest possible ordered graph $G'$, in the number of edges $g'$ in $G'$.

    Take an interval of independent vertices $i,i+1,\ldots,j$ in $G$, which have their neighbors only among the vertices $j+1,j+2,\ldots,g$. We can determine whether edges incident with the vertices $i,i+1,\ldots,j$ can be mapped to only one edge in $G$ as follows. Let us denote a set of vertices connected to vertices $i,i+1,\ldots,j$ as $V_{ij}$. Then we can map the edges incident with the vertices $i,i+1,\ldots,j$ to one edge in $G$ if and only if the set of vertices $V_{ij}$ form an interval in the ordering of $G$. The check whether the vertices in $V_{ij}$ form an interval in the ordering of $G$ can, of course, be performed in polynomial time.
    
    Now we define the following greedy algorithm.

\begin{alg}

\hfill

\label{alg:G'}
\begin{itemize}
    \item[] \textbf{Input:} An ordered graph $G$.
    \item[] \textbf{Output:} The smallest ordered matching $G'$, such that $G\to G'$. 
    \item[] \textbf{Method:} Iterate on the vertices in the given order and map the ordered subgraphs of $G$ to an edge whenever possible.
\end{itemize}


\begin{enumerate}
\item Set $A=\emptyset$ and $i=1$.
\item Add $i$ to $A$ and set $a_{max}=i$
\item For $j=i+1$ to $g$
    \begin{enumerate}
    [label*=\arabic*.,leftmargin=3\parindent]
        \item Add $j$ to $A$
        \item If any of the vertices in $A$ are connected or if $j$ is connected to any of the vertices $1,2,\ldots,i-1$, exit the loop
        \item If the edges incident with vertices in $A$ can be mapped to one edge in $G$, set $a_{max}=j$
    \end{enumerate}
\item Map all the edges incident with the vertices $i,i+1,\ldots,a_{max}$ to one of these edges and set $i=a_{max}$
\item For $j=i+1$ to $g$
    \begin{enumerate}[label*=\arabic*.,leftmargin=2\parindent]
        \item if $j$ is connected only to vertices among $j+1,j+2,\ldots,g$ then set $i=j, A=\emptyset$ and go to step 2.
        \item If $j=g$ stop
    \end{enumerate}
\end{enumerate}
\end{alg}

We note that if there exists an ordered homomorphism $G\to M$, then $G$ can be partitioned into disjoint subgraphs that each map to an edge. We also notice that if the ordered subgraph $G'$ of $G$ resulting from the greedy Algorithm~\ref{alg:G'} above has the minimum number of edges (minimum with respect to a number of edges of other ordered matching subgraphs of $G$, where $G$ could map) and is ordered matching, then it must be an order-retract of $G$ as if every subgraph of $G$ maps to an edge, then this edge must be a part of that subgraph. Here we note that this is the reason this approach works only for the matchings (since if e.g. two disjoint edges map into a path of length two, this might not be an ordered subgraph of the original ordered graph). We also note that if $G'$ is an ordered matching and is minimum with respect to the number of edges (again minimum with respect to the number of edges of other ordered matching subgraphs of $G$, where $G$ could map), then $G'$ is the core of $G$.

Therefore, if $G$ is not a core or $G$ is not an ordered matching and there exists an ordered homomorphism $G\to M$, then in the ordered retraction $f:G\to G$, there exists at least one subgraph in $G$, containing at least three vertices, that maps to a single edge in $G$. In addition, if there are two or more subgraphs in $G$ that map to an edge in $G$, these mappings can be performed consecutively and independently of each other.

Denote a subgraph of $G$ by $G_{e}$. We note that the only way that the $G_{e}$ in $G$ can map to an independent edge in $G'$ is if and only if $G_e$ can be partitioned into two sets of independent vertices $i,i+1,\ldots,j$ and $V_{ij}$, where both of these sets are independent intervals and $i,i+1,\ldots,j$ is only connected to $V_{ij}$, and vice versa. The subgraphs $G_{e}$ in $G$ that can map to a single edge in $G$ can therefore be uniquely determined by the set $i,i+1,\ldots,j$.

We see that if there exists a subgraph $G_{e}$ with more than two vertices in $G$ that can be mapped to a single edge, the algorithm finds it and performs the mapping (since these subgraphs can be mapped to edges independently). Therefore, if the core of $G$ is an ordered matching $G'$, the resulting ordered graph $G'$ at the conclusion of the algorithm is equal to $G'$ (of course, minimum with respect to $g'$). The algorithm again runs in $\Oh(g^{\Oh(1)})$ steps.

Now we can compare the number of disjoint edges $g'$ in $G'$ and the number $h$ of disjoint edges in $M$.

If $g'>h$, then we know that there is no ordered homomorphism $G'\to M$, as $G'$ is minimal with respect to $g'$. Then, of course, there is no ordered homomorphism $G\to M$. 

If $g'\le h$, then we know that we can find the ordered homomorphism $G'\to M$ by examining all $2^h$ possible choices of edges in $M$ and seeing if any of them is isomorphic to $G'$ (the isomorphism check in the case of ordered graphs is, of course, in \PP). If we find such a match, we know that there is an ordered homomorphism $G'\to M$ and therefore also $G\to M$ (by transitivity). If none of the combinations gives us a graph isomorphic to $G'$, then we know that there does not exist an ordered homomorphism $G'\to M$ (here we also use the fact that $G'$ is minimal with respect to $g'$ and $G'$ is a core). We notice that we can also perform $2^m$ possible choices of vertices of $M$ (to derive the complexity stated in the theorem), knowing that if we do not consider one vertex in an edge in $M$, we might not consider the whole edge, since $G'$ is an ordered matching.

As the whole procedure takes at most $\Oh(2^mg^{\Oh(1)}g'^{\Oh(1)})$ steps, the result follows.

\end{proof}

Let us now define a \emph{colored ordered graph} as an ordered graph with colored edges and \emph{colored ordered homomorphism} as an ordered homomorphism that preserves the colors of edges.

We define in the same way a colored ordered homomorphism generalization of the \homo problem and determine the parameterized complexity of this problem parameterized by $|V(H)|$ for colored ordered matchings.

\begin{prb}
\end{prb}
\problemStatement{\colhomo}
  {Colored ordered graphs $G$ and $H$.}
  {Does there exist a colored ordered homomorphism from $G$ to $H$?}

\begin{corollary}
\label{cor:colhomommFPT}
    \colhomo ~parameterized by $|V(H)|$, where $G$ is a colored ordered graph and $H$ is a colored ordered matching, is fixed-parameter tractable.
\end{corollary}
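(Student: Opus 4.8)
The plan is to lift the FPT algorithm of Theorem~\ref{thm:matchfpt} to the colored setting by treating edge colors as an extra invariant that must be preserved at every step. The overall structure of the proof carries over verbatim: we first reduce $G$ to a minimal colored ordered matching $G'$ such that $G \to G'$ via a colored ordered homomorphism, and then we test whether $G' \to H$ by brute force over subsets of $V(H)$. The only genuine modification is that the contraction operation performed by Algorithm~\ref{alg:G'}---collapsing the edges incident to an independent interval $i, i+1, \ldots, j$ with neighborhood $V_{ij}$ into a single edge---is now permitted only when all those edges share the same color. Thus, when deciding whether the edges incident to the vertices of $A$ can be mapped to one edge, I would add the requirement that these edges be monochromatic, and the resulting single edge of $G'$ inherits that common color.

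First I would restate the structural observation underlying Theorem~\ref{thm:matchfpt}: if a colored ordered homomorphism $G \to H$ exists, then $G$ partitions into disjoint subgraphs each mapping to a single edge of $H$, and within any such subgraph all edges must map to the same edge of $H$, hence must already carry identical colors. This is precisely why the color-compatibility check belongs inside the contraction step and why it does not otherwise disturb the correctness argument: a valid colored homomorphism can never merge edges of different colors, so refusing such merges loses no solutions. Consequently the greedy algorithm, augmented with the monochromaticity test, still produces a colored core $G'$ that is an order-retract of $G$ and is minimal in the number of edges, exactly as in the uncolored case. Since the color check is a constant-time comparison per candidate edge, the polynomial running time of the reduction phase is unaffected.

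Second I would handle the matching phase. Having obtained $G'$ with $g'$ edges, I compare $g'$ with the number of edges of $H$; if $g'$ exceeds it the answer is negative by minimality, and otherwise I enumerate the $2^{|V(H)|}$ subsets of $V(H)$ and test each induced colored ordered matching for isomorphism with $G'$. The isomorphism test now additionally verifies that corresponding edges agree in color, which is again polynomial. Because $G'$ is a colored core and is minimal, the existence of such a matching subgraph is equivalent to the existence of a colored ordered homomorphism $G' \to H$, and by transitivity to $G \to H$.

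I do not expect a serious obstacle here, since the result is a routine corollary. The one point requiring a little care is verifying that the proof of Theorem~\ref{thm:matchfpt} really only uses the property ``all edges in a contracted subgraph map to a common edge,'' so that inserting the color constraint is both sound (no spurious merges) and complete (no valid solution is excluded); once this is made explicit, the claimed bound of $\Oh\!\left(2^{|V(H)|} \cdot |V(G)|^{\Oh(1)}\right)$ follows immediately, establishing fixed-parameter tractability with respect to $|V(H)|$.
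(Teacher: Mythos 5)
Your proposal is correct and follows essentially the same route as the paper: both modify the greedy contraction step of Algorithm~\ref{alg:G'} (Step 3.3) to require that the edges incident with the interval $A$ all carry the same color, and then invoke the rest of the argument of Theorem~\ref{thm:matchfpt} unchanged. Your write-up is somewhat more explicit than the paper's (e.g.\ noting that the final isomorphism test must also compare colors), but the underlying idea is identical.
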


\begin{proof}
    Again, we notice that the only change with respect to the proof of Theorem ~\ref{thm:matchfpt} is that the intervals of vertices need to be incident with the edges of the same color if these edges are to map to an edge.

    We therefore simply adjust Step 3.3 in the algorithm by adding the additional condition asking if the set $A$ contains only the vertices that are incident with edges of the same color. The result then follows from the proof of Theorem ~\ref{thm:matchfpt}.
\end{proof}

We will now generalize the notion of ordered matching. Let $H_0$ be an ordered graph. We define an \emph{ordered $H_0$-factor} as an ordered graph $H$ that is a disjoint union of $H_0$. For example, taking $H_0=K_2$, the resulting ordered $H_0$-factor is an ordered matching. In the following, we may denote ordered $H_0$-factor simply as $H_0$-factor.

The following then determines the parameterized complexity of \homo problem parameterized by $|V(H)|$, where $H$ is $H_0$-factor.

\begin{theorem}\label{thm:H0FPT}
    Let $H_0$ be a fixed connected ordered core and $G$ be an ordered graph. Then deciding whether there exists ordered graph $H$, such that $H$ is a disjoint unions of the graphs $H_0$, and $G$ admits an ordered homomorphism to $H$ is fixed-parameter tractable with respect to $|V(H)|$.
\end{theorem}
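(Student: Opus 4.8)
The plan is to generalize the FPT algorithm from Theorem~\ref{thm:matchfpt}, where the target was a matching (i.e. a $K_2$-factor), to the case where the target is an $H_0$-factor for an arbitrary fixed connected ordered core $H_0$. The key structural observation is that an ordered homomorphism $G \to H$, where $H$ is a disjoint union of copies of $H_0$, partitions $V(G)$ into consecutive intervals, each of which maps into a single copy of $H_0$ (this follows because the copies of $H_0$ in $H$ occupy disjoint segments of the order on $V(H)$, and the preimage of each vertex of $H$ is an interval of $V(G)$, so the preimage of an entire copy of $H_0$ is a union of consecutive intervals, hence itself an interval). Thus, as in the matching case, the existence of a homomorphism is equivalent to being able to partition $G$ into consecutive ordered subgraphs $G_1, G_2, \ldots, G_k$, each admitting an ordered homomorphism to $H_0$, with $k \le |V(H)|/|V(H_0)|$.

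First I would make this ``block'' structure precise: since $H_0$ is connected, each connected component of $G$ must map entirely into one copy of $H_0$, and the linear order forces the components (and more generally the pieces) to be assigned to copies in a monotone, non-crossing way. Because $H_0$ is a fixed ordered graph, the $H_0$-coloring problem (deciding $G' \to H_0$ for a fixed ordered $G'$) is in \PP, as noted in the excerpt at the start of Section~\ref{sec:homomatching}. This lets me test, for any candidate consecutive interval $I$ of $V(G)$, whether the induced ordered subgraph $G[I]$ admits an ordered homomorphism to $H_0$ in polynomial time. The algorithm then reduces to finding a partition of the order into at most $t := |V(H)|/|V(H_0)|$ consecutive blocks, each of which $H_0$-colors.

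Second I would organize the search as a dynamic program (or a greedy/interval-DP) over the positions $1, \ldots, g$ of $V(G)$: define $\mathrm{reach}(i)$ to be the minimum number of blocks needed to cover the prefix ending at position $i$ so that each block $H_0$-colors, using the polynomial $H_0$-coloring test on each candidate block. A homomorphism $G \to H$ with $H$ an $H_0$-factor on at most $|V(H)|$ vertices exists if and only if the whole order can be covered by at most $t$ such blocks, i.e. $\mathrm{reach}(g) \le t$. The dependence on the parameter $|V(H)|$ enters only through $t$ and through the fixed size of $H_0$; the per-block test and the DP tabulation run in $\Oh(g^{\Oh(1)})$ time, so the overall running time is of the form $f(|V(H)|) \cdot g^{\Oh(1)}$, which is FPT. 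Alternatively, one may mimic the greedy ``grow each block as far as possible'' routine of Algorithm~\ref{alg:G'} and argue that, since $H_0$ is a core, the minimal block decomposition is canonical and greedy-optimal; I would favour the DP formulation as it avoids having to re-prove greedy optimality for general $H_0$.

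The main obstacle will be justifying that it suffices to consider \emph{consecutive} blocks and that each block may be taken to map to a \emph{single} copy of $H_0$, i.e.\ ruling out the possibility that a homomorphism splits a connected piece of $G$ across two copies or interleaves two pieces into one copy in a way that the greedy/DP misses. Connectedness of $H_0$ handles the first concern directly (the image of a connected subgraph of $G$ is connected, hence lies in one copy), and the interval property of ordered homomorphisms plus the disjoint-segment layout of the copies of $H_0$ handles the non-crossing requirement; making these two facts airtight, so that the block-covering formulation is provably equivalent to the existence of the homomorphism, is the crux. The remaining ingredients---polynomial-time $H_0$-coloring for fixed $H_0$ and the polynomial DP over a bounded number of blocks---are then routine.
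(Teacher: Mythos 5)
There is a genuine gap, and it sits exactly at the point you yourself identify as the crux. Your block decomposition rests on the claim that ``the copies of $H_0$ in $H$ occupy disjoint segments of the order on $V(H)$,'' so that the preimage of each copy is a consecutive interval of $V(G)$. This is false for ordered disjoint unions: the copies of $H_0$ in an ordered $H_0$-factor may interleave arbitrarily in the vertex ordering (for $H_0=K_2$ this is precisely the class of \emph{separated} matchings used in Theorem~\ref{thm:matsub} to encode permutations, where every pair of edges crosses). Consequently your dynamic program, which tests whether $V(G)$ can be partitioned into at most $t$ consecutive blocks each admitting a homomorphism to $H_0$, decides homomorphism to the \emph{sequential} $H_0$-factor with $t$ consecutive copies, not to the given $H$. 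A concrete failure for $H_0=K_2$: let $G$ have edges $(1,2)$ and $(3,4)$ and let $H$ have the crossing edges $(1,3)$ and $(2,4)$. Your DP covers $G$ by two blocks $\{1,2\}$ and $\{3,4\}$, each mapping to $K_2$, and reports yes with $t=2$; but no ordered homomorphism $G\to H$ exists, since monotonicity forces $f(2)\le f(3)$ while every assignment of the two edges of $G$ to edges of $H$ violates this. The interleaving pattern is exactly where the hardness of the unparameterized problem lives, and exactly where the parameter must be spent.

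The paper's proof spends the parameter there: it first greedily contracts $G$ to a minimum $H_0$-factor $G'$ with $g'$ copies (this part is close in spirit to your block computation, and your observation that connectivity of $H_0$ confines each connected piece of $G$ to one copy is used there too), but then it performs a brute-force enumeration over the $2^{h}$ subsets of copies of $H_0$ in $H$ (equivalently $h^{g'}$ choices) and checks ordered isomorphism of the chosen sub-factor with $G'$. That isomorphism check against subsets of the \emph{given} $H$ is what handles the crossing pattern, and it is the only step with superpolynomial dependence on the parameter. Your proposal omits this phase entirely, so it cannot be repaired by tightening the two facts you flag; you would need to add an analogue of the $2^{h}$ enumeration (or otherwise match the interleaving structure of $G'$ against that of $H$) to obtain a correct FPT algorithm.
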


\begin{proof}

    Similarly to the proof of Theorem ~\ref{thm:matchfpt}, we again start by constructing the smallest possible ordered graph $G'$ out of $G$, for which $G\to G'$. Of course, $G'$ will also consist of disjoint unions of $H_0$.

    Let $g$ and $h$ be a number of copies of ordered graphs $H_0$ in $G$ and $M_h=H$, respectively, and $h_0=|H_0|$. Then $G$ has $m=gh_0$ vertices and $H$ has $n=hh_0$ vertices. We denote the vertices in $G$ by $1,2,\ldots,m$.

    Let us construct the smallest possible ordered graph $G'$ out of $G$, $G'$ smallest in the number $g'$ of disjoint copies of $H_0$. We proceed in a manner similar to that of Theorem~\ref{thm:matchfpt}.
    
    Take an interval of independent vertices $i,i+1,\ldots,j$ in $G$, which are all the first vertices of their copies of $H_0$. We know that each of the vertices $i,i+1,\ldots,j$ is part of exactly one $H_0$ copy (from the condition that all the vertices $i,i+1,\ldots,j$ in $G$ are the first vertices of their copies of $H_0$). Determining whether a vertex is the first vertex of its copy of $H_0$ is, of course, polynomial (in fact, linear in $m$).
    We can determine if these $j-i+1$ copies of $H_0$ incident with vertices $i,i+1,\ldots,j$ can be mapped to only one $H_0$ in $G$ as follows. Let us denote the sets of vertices $A_{k}=\{v^k_i,v^k_{i+1},\ldots,v^k_{j}\}, k=2,3,\ldots,h_0$ as the set of vertices corresponding to the $k$-th vertices of copies of $H_0$, incident with the vertices $i,i+1,\ldots,j$. Let $A_{1}=\{i,i+1,\ldots,j\}$. Notice that $A_{k}, k\in [h_0]$ are well-defined as $H_0$ is connected. Then we can map copies of $H_0$ incident with vertices $i,i+1,\ldots,j$ to one $H_0$ in $G$ if and only if each of the sets of vertices $A_{k}, k=2,3,\ldots,h_0$ forms an interval in the ordering of $G$. The check whether all edge sets $A_{k}, k=2,3,\ldots,h_0$ form an interval in the ordering of $G$ can be, of course, performed in polynomial time.
    
    Now we define the following greedy algorithm.

\begin{alg}(\homo $H_0$-Matchings Algorithm)

\hfill

\label{alg:G'H0}
\begin{itemize}
    \item[] \textbf{Input:} An ordered $H_0$-matching $G$.
    \item[] \textbf{Output:} The smallest (with respect to $g'$) ordered $H_0$-matching $G'$, such that $G\to G'$. 
    \item[] \textbf{Method:} Iterate on the vertices in the given order and map the ordered subgraphs of $G$ to $H_0$ whenever possible.
\end{itemize}


\begin{enumerate}
\item Set $A=\emptyset$ and $i=1$.
\item Add $i$ to $A$ and set $a_{max}=i$
\item For $j=i+1$ to $m$
    \begin{enumerate}
    [label*=\arabic*.,leftmargin=3\parindent]
        \item Add $j$ to $A$
        \item If $j$ is not the first vertex of $H_0$, exit the loop
        \item If the copies of $H_0$ incident with vertices in $A$ can be mapped to one copy of $H_0$ in $G$, set $a_{max}=j$
    \end{enumerate}
\item Map all the copies of $H_0$ incident with the vertices $i,i+1,\ldots,a_{max}$ to one of these $H_0$ copies and set $i=a_{max}$
\item For $j=i+1$ to $m$
    \begin{enumerate}
    [label*=\arabic*.,leftmargin=2\parindent]
        \item if $j$ is the first vertex of $H_0$ then set $i=j, A=\emptyset$ and go to step 2.
        \item If $j=m$ stop
    \end{enumerate}
\end{enumerate}
\end{alg}

We note that since $G$ is a $H_0$-matching, if $G$ is not a core, then $G$ must map to its ordered subgraph $G'$, which is also a $H_0$-matching (here, we need the property of $H_0$ being an ordered core).
Therefore, if $G$ is not a core, then in the ordered retraction $f:G\to G$ that is not onto, there exists at least one set of subgraphs $H_0$ in $G$ that maps to a single graph $H_0$ in $G$. Again, in addition, if there are two or more subgraphs $H_0$ in $G$ that map to different subgraphs $H_0$ in $G$, these mappings can be performed consecutively and independently of each other.

Let us denote by $G_{H_0}$ an ordered graph induced by a subset of vertices in $G$ induced by a subset of copies of $H_0$ in $G$. As observed before, the only way that $G_{H_0}$ in $G$, $G_{H_0}$ containing at least two copies of $H_0$, can map to a single $H_0$ in $G$ is if and only if each set of vertices $A_{k}, k\in [h_0]$ forms an interval. The set $G_{H_0}$ of $H_0$ in $G$ that can map to a single $H_0$ in $G$ can therefore be uniquely determined by the set $A_{1}$ (we again use that $H_0$ is connected).

We see that the greedy algorithm above therefore searches through all the possible $G_{H_0}$. Thus, if there exists a set $G_{H_0}$ with more than one copy of $H_0$ in $G$ that can map to a single $H_0$ in $G$, the algorithm finds it and performs the mapping. The resulting ordered graph $G'$ at the end of the algorithm is therefore minimum with respect to $g'$ - a number of disjoint copies of $H_0$ in $G'$. The algorithm runs in $\Oh(m^{\Oh(1)})$ steps.

Now we can compare the number $g'$ of disjoint copies of $H_0$ in $G'$ and the number $h$ of disjoint copies of $H_0$ in $H$. 

If $g'>h$, then we know that there is no ordered homomorphism $G'\to H$, as $G'$ is minimal with respect to $g'$ and $G'$ is a core. Then there is, of course, no ordered homomorphism $G\to H$ (as $G$ has more copies of $H_0$ than $G'$). 

If $g'\le h$, then we know that we can find the ordered homomorphism $G'\to H$ by examining all $2^h$ possible choices of disjoint unions of $H_0$ in $H$ (also $h^{g'}$ choices would be enough, since it is enough to choose $g'$ out of $h$ copies of $H_0$ in $H$) and seeing if any of them is isomorphic to $G'$ (again, the isomorphism check for ordered graphs is in \PP). If we find such a match, we know that there is an ordered homomorphism $G'\to H$ and therefore also $G\to H$ (by transitivity). If none of the combinations gives us a graph isomorphic to $G'$, then we know that there does not exist an ordered homomorphism $G'\to H$ (here we again use the fact that $G'$ is minimal with respect to $g'$ and $G'$ is a core). 

We again notice that we can also perform $2^n$ possible choices of vertices of $H$ 
, knowing that if we do not consider at least one vertex in a copy of $H_0$ in $H$, we might not consider the whole $H_0$, as $H_0$ is a core and $G'$ contains only copies of $H_0$.

As the whole procedure takes at most $\Oh(2^hm^{\Oh(1)}g'^{\Oh(1)})$ steps 
, the result follows.

\end{proof}

\section{Matchings Core Problems}
\label{sect:MatchCore}

Let us define the following problem, which we denote \core.

\begin{prb}
\end{prb}
\problemStatement{\core}
  {Ordered graph $G$.}
  {Is there a non-surjective ordered homomorphism $G\to G$?}

The following corollary then follows directly from the proof of Theorem ~\ref{thm:matchfpt} for the special case of \core, where the core of an input ordered graph $G$ is a matching.

\begin{corollary}
    Let $G$ be an ordered graph. Then, if an ordered matching $M$ is the core of $G$, finding $M$ is in \PP.
\end{corollary}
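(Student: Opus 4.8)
The plan is to reuse, essentially verbatim, the machinery already developed in the proof of Theorem~\ref{thm:matchfpt}. There the greedy procedure of Algorithm~\ref{alg:G'} is shown to take an arbitrary ordered graph $G$ and output, in $\Oh(g^{\Oh(1)})$ time, the smallest ordered-matching subgraph $G'$ of $G$ (smallest in the number of edges) to which $G$ admits an ordered homomorphism, whenever such a matching exists. So my first and only genuine algorithmic step would be to run Algorithm~\ref{alg:G'} on the input $G$ and return its output $G'$. The running time is manifestly polynomial, so the entire difficulty of the corollary lies in the correctness claim, not in the complexity bound.

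It then remains to argue that, under the hypothesis that the core $M$ of $G$ is an ordered matching, this output $G'$ is exactly $M$. Here I would lean on two facts already established in Theorem~\ref{thm:matchfpt}. First, contracting an independent interval $\{i,\dots,j\}$ together with its neighbourhood $V_{ij}$ to a single edge is a legitimate order-retraction precisely when $V_{ij}$ itself forms an interval; and for matchings (unlike for general ordered graphs) such a contraction always lands inside $G$ as an induced subgraph, so the surviving edge genuinely belongs to the contracted block. Second, these contractions act on disjoint blocks and therefore commute, so a single left-to-right greedy sweep that performs every available contraction produces a matching $G'$ that is minimum with respect to the number of edges; and a minimum ordered-matching subgraph to which $G$ maps is an order-retract of $G$, hence its core. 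Since the core is unique, as recalled in the introduction, we conclude $G' = M$.

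The main obstacle is precisely the correctness of the greedy choice: one must be certain that processing intervals greedily from left to right never forecloses a contraction that some other processing order would have enabled, and that the object returned really is a matching rather than a graph in which some vertex was left with the wrong degree. Both points are exactly what the ``independence of the mappings'' and the ``minimality'' arguments in the proof of Theorem~\ref{thm:matchfpt} supply: because each contractible block is uniquely determined by its initial interval $\{i,\dots,j\}$, and distinct blocks do not interfere, greediness is optimal and the output is a disjoint union of single edges. Thus the corollary follows by specializing that proof to the case in which the minimum matching $G'$ happens to be the core, and beyond observing that Algorithm~\ref{alg:G'} terminates in polynomial time, no additional work is required.
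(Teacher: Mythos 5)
Your proposal is correct and follows exactly the paper's route: the paper likewise proves this corollary by simply invoking the polynomial-time greedy Algorithm~\ref{alg:G'} from the proof of Theorem~\ref{thm:matchfpt}, whose output $G'$ is the minimum ordered-matching retract of $G$ and hence, by uniqueness of the core, equals $M$ whenever the core is a matching. No substantive difference from the paper's argument.
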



Note that if an ordered graph $G$ is an ordered matching, then finding its core in polynomial time is also an easy consequence since its core must be a matching.

The following is then a related colored ordered homomorphism problem (generalization of \core).

\begin{prb}
\end{prb}
\problemStatement{\colcore}
  {Colored ordered graph $G$.}
  {Is there a non-surjective colored ordered homomorphism $G\to G$?}

\begin{corollary}

    Let $G$ be an ordered colored graph. Then, if an ordered colored matching $M$ is the core of $G$, finding $M$ is in \PP.
\end{corollary}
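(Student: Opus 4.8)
The plan is to mirror the argument that produced the uncolored corollary, now invoking the colored greedy construction from Corollary~\ref{cor:colhomommFPT} in place of the plain one from Theorem~\ref{thm:matchfpt}. Recall that in the proof of Theorem~\ref{thm:matchfpt}, Algorithm~\ref{alg:G'} greedily collapses each maximal block of independent vertices whose common neighborhood forms an interval onto a single edge, and its output $G'$ is the smallest ordered matching (in the number of edges) admitting $G\to G'$; moreover, whenever the core of $G$ happens to be a matching, this minimal $G'$ is exactly that core, and the whole construction runs in time polynomial in $|V(G)|$.

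For the colored setting I would run precisely the modified algorithm described in the proof of Corollary~\ref{cor:colhomommFPT}, i.e. Algorithm~\ref{alg:G'} with the extra condition in Step~3.3 requiring that all edges incident with the current set $A$ share a single color before the block is allowed to collapse. This keeps the procedure polynomial, since the added check is merely a scan of the colors of the edges touching $A$. I then let $G'$ denote its output and claim that, whenever the colored core $M$ of $G$ is a colored ordered matching, we have $G'=M$.

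The key step is to argue that the minimality-plus-core reasoning of Theorem~\ref{thm:matchfpt} survives the color constraint. A colored ordered homomorphism is in particular an ordered homomorphism, so the structural observation still applies: a colored ordered subgraph of $G$ can collapse onto a single edge only if its vertices split into two independent intervals joined to each other, and now additionally all the collapsed edges must carry the same color so that the target edge is color-preserved. The modified Step~3.3 enforces exactly this extra requirement, so every collapse the algorithm performs is realized by a genuine colored order-retraction whose image edge is one of the edges being collapsed, hence a colored subgraph of $G$. Consequently the greedy output $G'$ is again a colored order-retract of $G$ that is minimal in the number of edges among colored matchings to which $G$ maps, and, as in Theorem~\ref{thm:matchfpt}, such a minimal colored matching is the colored core.

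The main obstacle I anticipate is purely the bookkeeping of colors in this minimality argument: one must check both that no valid color-preserving collapse is ever missed by the greedy rule, so that $G'$ is truly minimal, and that no illegal collapse merging edges of different colors is ever performed, so that $G'$ remains a colored order-retract. Both directions follow from the greedy scan considering, at each start vertex, the longest admissible same-colored block, exactly as in the proof of Corollary~\ref{cor:colhomommFPT}; once they are in place, the equality $G'=M$ and hence membership of the construction in \PP ~follow as before.
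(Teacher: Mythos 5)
Your proposal is correct and follows essentially the same route as the paper: the paper's own proof simply invokes the proof of Theorem~\ref{thm:matchfpt} together with the color-constrained modification of Algorithm~\ref{alg:G'} from Corollary~\ref{cor:colhomommFPT}, which is exactly the argument you spell out (in somewhat more detail than the paper itself provides).
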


\begin{proof}
The statement follows from the proof of Theorem ~\ref{thm:matchfpt} and Corollary ~\ref{cor:colhomommFPT}, where we do the same type of adjustment to Algorithm ~\ref{alg:G'} as we did in the proof of Corollary ~\ref{cor:colhomommFPT}.


\end{proof}

These results are in contrast with the \NP-completeness of the general \core problem (and therefore also \colcore), which we show in the article ~\cite{certik_core_2025}.

We also show in ~\cite{certik_complexity_2025} that the parameterized \homo problem, where input graphs are any ordered graphs, is \wone -hard. We observe that while the ordered graphs $G$ and $H$ in ~\cite{certik_complexity_2025} can be quite complex, the ordered graphs $G$ and $H$ in Theorem ~\ref{thm:H0FPT} are rather restricted and simple.

Therefore, it remains to fill the gap between these classes of ordered graphs and categorize which of the classes of ordered graphs makes \homo parameterized by $|V(H)|$ in \wone-hard and which of them in FPT. We will not discuss this question in the present article.





\begin{credits}

\subsubsection{\discintname}

\end{credits}
%
%
%
\bibliographystyle{splncs04}
\bibliography{mybibliography}
\end{document}